 \newcommand{\F}{\mathcal{F}}
 \renewcommand{\H}{\mathcal{H}}
 \newcommand{\X}{\mathcal{X}}
\newcommand{\Y}{\mathcal{Y}} 
\newcommand{\LTL}{{\sc ltl}\xspace}
\newcommand{\LTLf}{{\sc ltl}$_f$\xspace}
\newcommand{\LDLf}{{\sc ldl}$_f$\xspace}
\newcommand{\Nat}{{\rm I\kern-.23em N}}
\def\ltlf{\LTLf}
\def\tt{\top}
\def\ff{\bot}
\def\X{\mathcal{X}}
\def\Y{\mathcal{Y}}
\def\F{\mathcal{G}}
\newtheorem{definition}{Definition}
\newtheorem{theorem}{Theorem}
\title{Symbolic \LTLf Synthesis}
\author{Shufang Zhu\\East China Normal University \And Lucas M. Tabajara\\Rice University \And Jianwen Li\thanks{Corresponding author}\\East China Normal University\\ \& Rice University 
\AND Geguang Pu\thanks{Corresponding author}\\East China Normal University \And Moshe Y. Vardi\\Rice University}
\begin{document}
\maketitle

\begin{abstract}
\LTLf synthesis is the process of finding a strategy that satisfies a linear temporal specification over finite traces. An existing solution to this problem relies on a reduction to a DFA game. In this paper, we propose a symbolic framework for \LTLf synthesis based on this technique, by performing the computation over a representation of the DFA as a boolean formula rather than as an explicit graph. This approach enables strategy generation by utilizing the mechanism of boolean synthesis. We implement this symbolic synthesis method in a tool called \emph{Syft}, and demonstrate by experiments on scalable benchmarks that the symbolic approach scales better than the explicit one.
\end{abstract}

\section{Introduction}\label{sec:intro}



The problem of synthesis from temporal specifications can be used to model a number of different problems in AI, in particular planning. Linear Temporal Logic, or \LTL, is the standard formalism to describe these specifications, but while \LTL is classically interpreted over infinite runs, many planning problems have a \emph{finite horizon}, since they assume that execution stops after a specific goal is achieved. This context leads to the emergence of a version of \LTL with alternative semantics over finite traces, called \LTLf. Some of the planning problems which can be reduced to \LTLf synthesis include several variants of conditional planning with full observability~\cite{BacchusK00,GHL09}. A general survey of applications of \LTLf in AI and CS can be found in~\cite{DegVa13}. 

\LTL synthesis in infinite-horizon settings has been well investigated in theory since~\cite{PnueliR89}, but the lack of good algorithms for the crucial step of automata determinization is prohibitive for finding practical implementations~\cite{DFogartyKVW13}. In addition, usual approaches rely on parity games~\cite{Thomas95}, for which no polynomial-time algorithm is known. In contrast, in the finite-horizon setting the specification can be represented by a finite-state automaton, for which determinization is practically feasible~\cite{TRV12} and the corresponding game can be solved in polynomial time on the number of states and transitions of the deterministic automaton. This opens the possibility for theoretical solutions to be implemented effectively.


A solution to the \LTLf synthesis problem was first proposed in~\cite{DegVa15}, based on DFA games. Following this method, an \LTLf specification can be transformed into a DFA with alphabet comprised of propositional interpretations of the variables in the formula. A winning strategy for the game defined by this DFA is then guaranteed to realize the temporal specification.

In this paper, we present the first practical implementation of this theoretical framework for \LTLf synthesis, in the form of a tool called \emph{Syft}. \emph{Syft} follows a symbolic approach based on an encoding of the DFA using boolean formulas, represented as Binary Decision Diagrams (BDDs), rather than an explicit representation through the state graph. We base the choice of a symbolic approach in experiments on DFA construction from an \LTLf specification. We compared between two methods for DFA construction: one symbolic using the tool MONA~\cite{KlaEtAl:Mona}, receiving as input a translation of the \LTLf specification to first-order logic, and one explicit using SPOT~\cite{spot}. Although both methods display limited scalability, the results show that the symbolic construction scales significantly better.

Using a symbolic approach allows us to leverage techniques for \emph{boolean synthesis}~\cite{DrLu.cav16} in order to compute the winning strategy. Our synthesis framework employs a fixpoint computation to construct a formula that expresses the choices of outputs in each state that move the game towards an accepting state. By giving this formula as input to a boolean synthesis procedure we can obtain a winning strategy whenever one exists.

Further experiments comparing the performance of the \emph{Syft} tool with an explicit implementation \emph{E-Syft} again display better scalability for the symbolic approach. Furthermore, a comparison with a standard \LTL synthesis tool confirms that restricting the problem to finite traces allows for more efficient techniques to be used. Finally, the results show DFA construction to be the limiting factor in the synthesis process.

\section{Preliminaries}
\subsection{\ltlf Basics}
Linear Temporal Logic (\LTL) over finite traces, i.e. \ltlf, has the same syntax as \LTL over infinite traces introduced 
in \cite{Pnu77}. Given a set of propositions $P$, the syntax of \ltlf formulas is defined as follows:
\begin{center}
	$\phi ::= \tt\ |\ \ff\ |\ a\ |\ \neg \phi\ |\ \phi_1\wedge\phi_2\ |\ X\phi\ |\ \phi_1 U \phi_2$
\end{center}

$\tt$ and $\ff$ represent \textit{true} and \textit{false} respectively. $a\in P$ is an 
\textit{atom}, and we define a literal $l$ to be an atom or the negation of an atom. $X$ (Next) and $U$ (Until)
are temporal operators. We also introduce their dual operators, namely $X_w$ (Weak Next) and $R$ (Release), defined as 
$X_w \phi\equiv \neg X\neg \phi$ and $\phi_1 R\phi_2\equiv \neg (\neg\phi_1 U\neg \phi_2)$. 
Additionally, we define the abbreviations $F\phi\equiv \tt U\phi$ and $G\phi\equiv \ff R\phi$. Standard boolean abbreviations, such as $\vee$ (or) and $\rightarrow$ (implies) are also used. 

A \textit{trace} $\rho = \rho[0],\rho[1],\ldots$ is a sequence of propositional interpretations (sets), in which 
$\rho[m]\in 2^P$ ($m \geq 0$) is the $m$-th interpretation of $\rho$, and $|\rho|$ represents the length of $\rho$. Intuitively, $\rho[m]$ is interpreted as the set of propositions which are $true$ at instant $m$.
Trace $\rho$ is an \textit{infinite} trace if $|\rho| = \infty$, which is formally denoted as $\rho\in (2^P)^{\omega}$; 
 otherwise $\rho$ is a \textit{finite} trace, denoted as $\rho\in (2^P)^{*}$. \ltlf formulas are interpreted over finite traces. Given a finite trace $\rho$ and an \ltlf formula 
$\phi$, we inductively define when $\phi$ is $true$ for $\rho$ at step $i$ ($0 \leq i < |\rho|$), written $\rho, i \models \phi$, as follows: 
\begin{itemize}
  \item $\rho, i \models\tt$;
  \item $\rho, i \not\models\ff$;
  \item $\rho, i \models a$ iff $a \in \rho[i]$;
  \item $\rho, i \models \neg \phi$ iff $\rho,i \not\models \phi$;
  \item $\rho, i \models\phi_1 \wedge \phi_2$, iff $\rho,i \models \phi_1$ and $\rho, i \models \phi_2$;
  \item $\rho, i \models X\phi$, iff $i+1 < |\rho|$ and $\rho, i+1 \models \phi$;
  \item $\rho, i \models \phi_1 U \phi_2$, iff there exists $j$ s.t. $i\leq j < |\rho|$ and $\rho, j\models \phi_2$, and for all $k$, $i \leq k < j$, we have $\rho, k \models \phi_1$.
\end{itemize}

An \ltlf formula $\phi$ is $true$ in $\rho$, denoted by $\rho \models \phi$, if and only if $\rho, 0\models\phi$. We next define the \ltlf synthesis problem.

\begin{definition}[\ltlf Synthesis]
Let $\phi$ be an \ltlf formula and $\X, \Y$ be two disjoint atom sets such that $\X\cup \Y = P$. $\X$ is the set of \emph{input variables} and $\Y$ is the set of \emph{output variables}. $\phi$ is \emph{realizable} with respect to $\langle\X, \Y\rangle$ if there exists a strategy $g: (2^{\X})^*\rightarrow 2^{\Y}$, such that for an arbitrary infinite sequence $\lambda = X_0,X_1,\ldots\in (2^{\X})^{\omega}$ of propositional interpretations over $\X$, we can find $k\geq 0$ such that $\phi$ is $true$ in the finite trace $\rho = (X_0\cup g(\epsilon)), (X_1\cup g(X_0)),\ldots , (X_k\cup g(X_0,X_1,\ldots,X_{k-1}))$.
\end{definition}
Moreover, variables in $\X$ ($\Y$) are  called X- (Y-)variables.

\subsection{DFA Games}
The traditional way of performing \ltlf synthesis is by a reduction to corresponding Deterministic Finite Automaton (DFA) games. According to~\cite{DegVa15}, every \ltlf formula can be translated in 2EXPTIME to a DFA which accepts the same language as the formula. 

Let the DFA $\F$ be $(2^{\X\cup \Y}, S, s_0, \delta, F)$, where $2^{\X\cup \Y}$ is the alphabet, $S$ is the set of states, $s_0$ is the initial state, $\delta: S\times 2^{\X\cup \Y}\rightarrow S$ is the transition function, and $F$ is the set of accepting states. A game on $\F$ consists of two players, the controller and the environment. $\X$ is the set of \emph{uncontrollable propositions}, which are under the control of the environment, and $\Y$ is the set of \emph{controllable propositions}, which are under the control of the controller. The DFA game problem is to check the existence of \emph{winning strategy} for the controller and generate it if exists. A strategy for the controller is a function $g: (2^{\X})^* \rightarrow 2^{\Y}$, deciding the values of the controllable variables for every possible history of the uncontrollable variables. To define a \emph{winning strategy} 
for the controller, we use the definition of \emph{winning state} below. 

\begin{definition}[Winning State]
	Given a DFA $\F = (2^{\X\cup \Y}, S, s_0, \delta, F)$, $s\in S$ is a \emph{winning state} if $s\in F$ is an accepting state, or there exists $Y\in 2^{\Y}$ such that, for every $X \in 2^\X$, $\delta(s, X\cup Y)$ is a winning state. Such $Y$ is the \emph{winning output} of winning state $s$.
\end{definition}
	
A strategy g is a \emph{winning strategy} if, given an infinite sequence $\lambda = X_0,X_1,X_2,\ldots\in (2^{\X})^\omega$, there is a finite trace $\rho = (X_0\cup g(\epsilon)), (X_1\cup g(X_0)),\ldots , (X_k\cup g(X_0,X_1,\ldots,X_{k-1}))$ that ends at an accepting state. If the initial state $s_0$ is a wining state, then a winning strategy exists. After obtaining a DFA from the \ltlf specification, we can utilize the solution to the DFA game for \ltlf synthesis.

Formally, the \emph{winning strategy} can be represented as a deterministic finite transducer, defined as below.	
\begin{definition}[Deterministic Finite Transducer]\label{extrans}
	Given a DFA $\F=(2^{\X\cup \Y}, S, s_0, \delta, F)$, a deterministic finite transducer $\mathcal{T} = (2^{\X}, 2^{\Y}, Q, s_0, \varrho, \omega, F)$ is defined as follows:
	\begin{itemize}
		\item $Q\subseteq S$ is the set of winning states; 
		\item $\varrho : Q\times 2^{\X}\rightarrow Q$ is the transition function such that $\varrho (q, X) = \delta (q, X\cup Y)~and~Y =\omega(q)$;
		\item $\omega: Q\rightarrow 2^{\Y}$ is the output function such that $\omega (q)$ is a winning output of q.
	\end{itemize}
\end{definition}

\subsection{Boolean Synthesis}
In this paper, we utilize the \emph{boolean synthesis} technique proposed in~\cite{DrLu.cav16}. 
The formal definition of the boolean synthesis problem
is as follows.


\begin{definition}[Boolean Synthesis~\cite{DrLu.cav16}]
Given two disjoint atom sets $\X, \Y$ of input and output variables, respectively, and a boolean formula $\xi$ over $\X\cup \Y$, the boolean synthesis problem is to construct a function $\gamma:2^{\X}\rightarrow 2^{\Y}$ such that, for all $X \in 2^\X$, if there exists $Y \in 2^\Y$ such that $X \cup Y \models \xi$, then $X \cup \gamma(X) \models \xi$. 
The function $\gamma$ is called the \emph{implementation function}.
\end{definition}

We treat boolean synthesis as a black box, using the implementation function construction in the \ltlf synthesis to obtain the output function of the transducer. For more details on techniques and algorithms for boolean synthesis we refer to \cite{DrLu.cav16}. 


\section{Translation from \ltlf Formulas to DFA}\label{sec:dfa}

Following~\cite{DegVa15}, in order to use DFA games to solve the synthesis problem, we need to first convert the \LTLf specification to a DFA. This section focuses on DFA construction. Given an \LTLf formula $\phi$, the corresponding DFA can be constructed explicitly or symbolically.
\subsection{DFA Construction}
SPOT \cite{spot} is the state-of-the-art platform for conversion from \LTL formulas to explicit Deterministic B\"uchi Automaton (DBA). The reduction rules from an \ltlf formula $\phi$ to an \LTL formula $\phi_e$ are proposed in \cite{DegVa13}, and are already implemented in SPOT. Thus by giving an \ltlf formula to SPOT, it returns the DBA $D_e$ for $\phi_e$. $D_e$ can be trimmed to a DFA that recognizes the language of the \LTLf formula $\phi$. For more details on this reduction, we refer to~\cite{DuttaVT13,DuttaV14}.

MONA \cite{KlaEtAl:Mona} is a tool that translates from the Weak Second-order Theory of One or Two successors (WS1S/WS2S) to symbolic DFA. First Order Logic (FOL) on finite words, which is a fragment of WS1S, has the same expressive power as \ltlf, so an \ltlf formula $\phi$ can be translated to a corresponding FOL formula $fol_\phi$ \cite{DegVa13}. Taking such a FOL formula $fol_\phi$ as input, MONA is able to generate the DFA for $\phi$.

\subsection{Evaluations}
It is unnecessary to compare the outputs of SPOT and MONA in terms of size, since both tools return a minimized DFA.The key point is to test them in scalability. 
\LTL and \LTLf have the same syntax, so we construct our benchmarks from 20 basic cases, half of which are realizable, from the \LTL literature \cite{Lily}. Regarding the \LTL benchmarks, the semantics of \LTL formulas is not, in general preserved, when moving to 
the finite-trace setting.

Since these basic cases are too small to be used individually to evaluate the DFA construction tools, we use a class of \emph{random conjunctions} over basic cases~\cite{DanieleGV99}. Note that real specifications typically consist of many temporal properties, whose conjunction ought to be realizable. Formally, a random conjunction formula $RC(L)$ has the form: $RC(L) = \bigwedge_{1\leq i\leq L}P_i(v_1,v_2,...,v_k)$, where $L$ is the number of conjuncts, or the length of the formula, and $P_i$ is a randomly selected basic case (out of the 20 ones). Variables $v_1,v_2,...,v_k$ are chosen randomly from a set of $m$ candidate variables. Given $L$ and $m$ (the size of the candidate variable set), we generate a formula $RC(L)$ in the following way: (1) Randomly select $L$ basic cases; (2) For each case $\phi$, substitute every variable $v$ with a random new variable $v'$ chosen from $m$ atomic propositions.
If $v$ is an X-variable in $\phi$, then $v'$ is also an X-variable in $RC(L)$. The same applies to the Y-variables.

Each candidate variable may be chosen multiple times, so the number of variables in the formula varies. We generate 50 random formulas for each configuration $(L, m)$, adding up to 4500 instances in total. 
Formula lengths $L$ range from 1 to 10, and $m$ varies in increments of 50 from 100 to 500. The platform used in the experiments is a computer cluster consisting of 2304 processor cores in 192 Westmere nodes (12 processor cores per node) at 2.83 GHz with 4GB of RAM per core, and 6 Sandy Bridge nodes of 16 processor cores each, running at 2.2 GHz with 8GB of RAM per core. Time out was set to 120 seconds. Cases that cannot generate the DFA within 120 seconds fail even if the time limit is extended, due to running out of memory.

\begin{figure}
  \centering
  \includegraphics[width=3in]{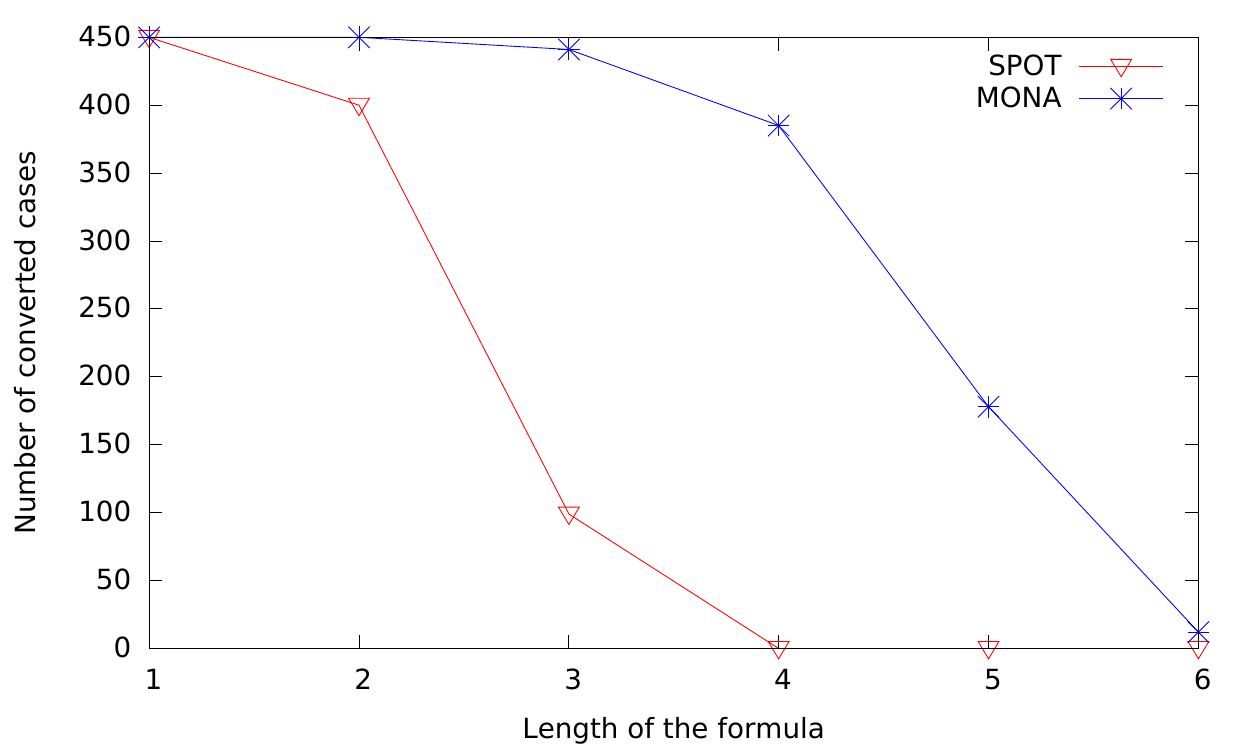}
  \caption{Comparison of scalability between SPOT and MONA on the length of formulas}\label{fig:plotlength}
\end{figure}

Here we consider the number of successfully converted cases for scalability evaluation. The results are summarized in Figure~\ref{fig:plotlength} and ~\ref{fig:plotnvars}, which present the scalability of SPOT and MONA on $L$ and $m$ respectively. As $L$ grows, MONA demonstrates greater scalability, since SPOT cannot handle cases for $L>4$. We conjecture that the poor scalability on $L$ of SPOT is due to the bad handling of conjunctive goals. In the comparison of scalability on $m$, MONA is able to solve around twice as many cases  than SPOT for each $m$. Given these results, we adopt MONA for the DFA construction process and pursue a symbolic approach for \ltlf synthesis.
\begin{figure}
  \centering
  \includegraphics[width=3in]{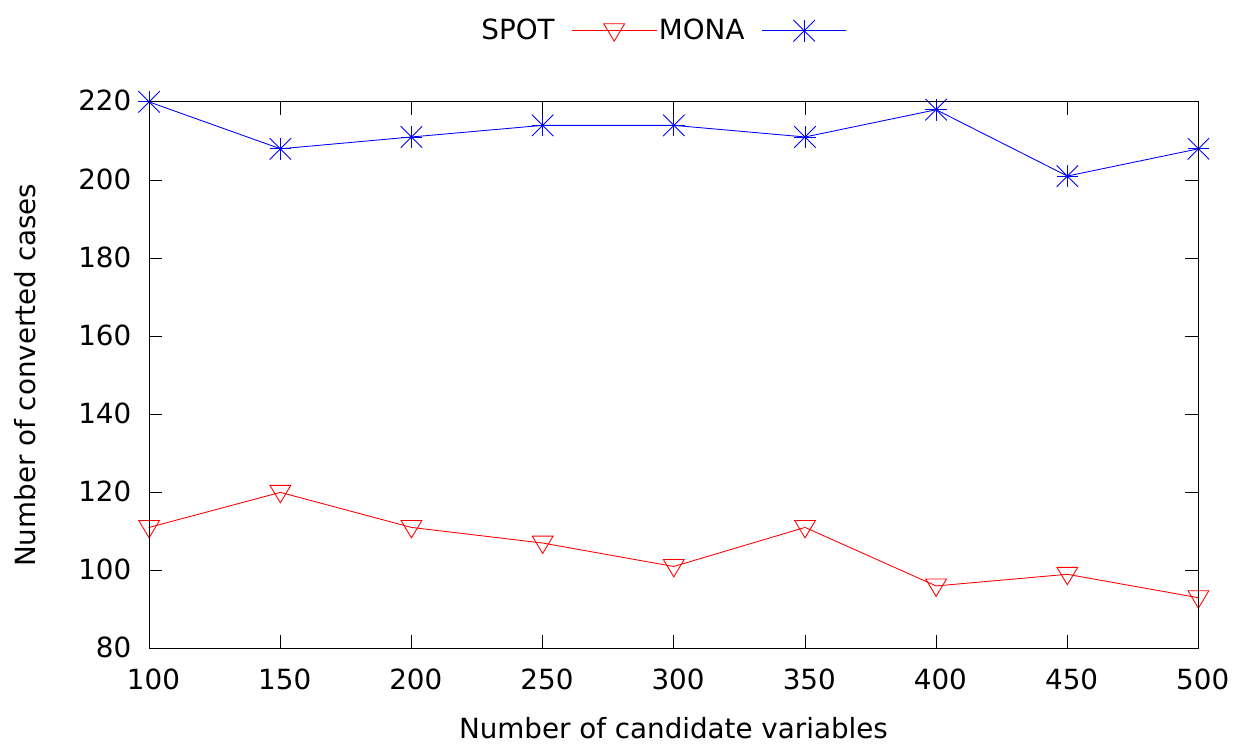}
  \caption{Comparison of scalability between SPOT and MONA on the number of variables}\label{fig:plotnvars}
\end{figure}

\section{Symbolic \ltlf Synthesis}\label{sec:booleansyn}

From an explicit automaton, the DFA game can be solved following the approach described in \cite{DegVa15}, by searching the state graph to compute the set of winning states and choosing for each winning state a winning output. The winning states and outputs can then be used to construct a transducer that implements the winning strategy.

To solve a DFA game symbolically, we first use MONA to construct a symbolic representation of the DFA. Then, we perform a fixpoint computation over this symbolic representation to obtain a boolean formula encoding all winning states of the DFA along with their corresponding winning outputs. Finally, if the game is realizable, we can synthesize a winning strategy from this formula with the help of a boolean synthesis procedure. In the rest of this section we describe each of these steps in more detail.

We start by defining the concept of \emph{symbolic automaton}:
\begin{definition}[Symbolic Automaton] Given a DFA $\mathcal{G} = (2^{\X\cup\Y}, S, s_0, \delta, F)$, the corresponding symbolic automaton $\mathcal{F} = (\mathcal{X}, \mathcal{Y}, \mathcal{Z}, Z_0, \eta, f)$ is defined as follows:
\begin{itemize}
\item $\mathcal{X}$ and $\mathcal{Y}$ are as defined for $\mathcal{G}$;

\item $\mathcal{Z}$ is a set of $\lceil \log_2|S| \rceil$ new propositions such that every state $s \in S$ corresponds to an interpretation $Z \in 2^\mathcal{Z}$;

\item $Z_0 \in 2^\mathcal{Z}$ is an interpretation of the propositions in $\mathcal{Z}$ corresponding to the initial state $s_0$;

\item $\eta : 2^\mathcal{X} \times 2^\mathcal{Y} \times 2^\mathcal{Z} \rightarrow 2^\mathcal{Z}$ is a boolean function mapping interpretations $X$, $Y$ and $Z$ of the propositions of $\mathcal{X}$, $\mathcal{Y}$ and $\mathcal{Z}$ to a new interpretation $Z'$ of the propositions of $\mathcal{Z}$, such that if $Z$ corresponds to a state $s \in S$ then $Z'$ corresponds to the state $\delta(s, X \cup Y)$;

\item $f$ is a boolean formula over the propositions in $\mathcal{Z}$, such that $f$ is satisfied by an interpretation $Z$ iff $Z$ corresponds to a final state $s \in F$.
\end{itemize}

\end{definition}

Intuitively, the \emph{symbolic automaton} represents states by propositional interpretations, the transition function by a boolean function and the set of final states by a boolean formula.

To solve the realizability problem over a symbolic automaton we compute a boolean formula $w$ over $\mathcal{Z}$ that is satisfied exactly by those interpretations that correspond to winning states. The specification is realizable if and only if $Z_0$ satisfies $w$. To solve the synthesis problem, we compute a boolean function $\tau : 2^\mathcal{Z} \rightarrow 2^\mathcal{Y}$ such that for any sequence $(X_0, Y_0, Z_0), (X_1, Y_1, Z_1), \ldots$ that satisfies: (1) $Z_0$ is the initial state; (2) For every $i \geq 0$, $Y_i = \tau(Z_i)$; (3) For every $i \geq 0$, $Z_{i+1} = \eta(X_i, Y_i, Z_i)$ 
there exists an $i$ such that $Z_i$ satisfies $f$. In other words, starting from the initial state, for any sequence of uncontrollable variables, if the controllable variables are computed by $\tau$ and the next state is computed by $\eta$, the play eventually reaches an accepting state.

\subsection{Realizability and Synthesis over Symbolic Automata}

We can compute $w$ and $\tau$ through a fixpoint computation over two boolean formulas: $w_i$, over the set of propositions $\mathcal{Z}$, and $t_i$, over $\mathcal{Z} \cup \mathcal{Y}$. These formulas encode winning states and winning outputs in the following way: every interpretation $Z \in 2^\mathcal{Z}$ such that $Z \models w_i$ corresponds to a winning state, and every interpretation $(Z, Y) \in 2^\mathcal{Z} \times 2^\mathcal{Y}$ such that $(Z, Y) \models t_i$ corresponds to a winning state together with a winning output of that state. When we reach a fixpoint, $w_i$ should encode all winning states and $t_i$ all pairs of winning states and winning outputs.

In the procedure below, we compute the fixpoints of $w_i$ and $t_i$ starting from $w_0$ and $t_0$. We assume that we are able to perform basic Boolean operations over the formulas, as well as substitution, quantification and testing for logical equivalence of two formulas.

In the first step of the computation, we initialize $t_0(Z, Y) = f(Z)$ and $w_0(Z) = f(Z)$, since every accepting state is a winning state. Note that $t_0$ is independent of the propositions from $\mathcal{Y}$, since once the play reaches an accepting state the game is over and we don't care about the outputs anymore.  Then we construct $t_{i+1}$ and $w_{i+1}$ as follows:
$t_{i+1}(Z, Y) = t_i(Z, Y) \lor (\neg w_i(Z) \land \forall X . w_i(\eta(X, Y, Z)))$, $w_{i+1}(Z) = \exists Y . t_{i+1}(Z, Y)$

An interpretation $(Z, Y) \in 2^\mathcal{Z} \times 2^\mathcal{Y}$ satisfies $t_{i+1}$ if either: $(Z, Y)$ satisfies $t_i$; or $Z$ was not yet identified as a winning state, and for every input $X$ we can move from $Z$ to an already-identified winning state by setting the output to $Y$. Note that it is important in the second case that $Z$ has not yet been identified as a winning state, because it guarantees that the next transition will move closer to the accepting states. Otherwise, it would be possible, for example, for $t_{i+1}$ to accept an assignment to $Y$ that moves from $Z$ back to itself, making the play stuck in a self loop.

From $t_{i+1}$, we can construct $w_{i+1}$ by existentially quantifying the output variables. This means that $w_{i+1}$ is satisfied by all interpretations $Z \in \mathcal{Z}$ that satisfy $t_{i+1}$ for some output, ignoring what the output is.
The computation reaches a fixpoint when $w_{i+1} \equiv w_i$ ($\equiv$ denoting logical equivalence). At this point, no more states will be added, and so all winning states have been found. By evaluating $w_i$ on $Z_0$ we can know if there exists a winning strategy. If that is the case, $t_i$ can be used to compute this strategy. This can be done through the mechanism of boolean synthesis.

By giving $t_i$ as the input formula to a boolean synthesis procedure, and setting $\mathcal{Z}$ as the input variables and $\mathcal{Y}$ as the output variables, we get back a function $\tau : 2^\mathcal{Z} \rightarrow 2^\mathcal{Y}$ such that $(Z, \tau(Z)) \models t_i$ if and only if there exists $Y \in 2^\mathcal{Y}$ such that $(Z, Y) \models t_i$.

Using $\tau$, we can define a \emph{symbolic transducer} $\mathcal{H}$ corresponding to the winning strategy of the DFA game. 
\begin{definition}[Symbolic Transducer]\label{symtrans}
Given a symbolic automaton $\mathcal{F} = (\mathcal{X}, \mathcal{Y}, \mathcal{Z}, Z_0, \eta, f)$ and a function $\tau : 2^\mathcal{Z} \rightarrow 2^\mathcal{Y}$, the symbolic transducer $\mathcal{H} = (\mathcal{X}, \mathcal{Y}, \mathcal{Z}, Z_0, \zeta, \tau, f)$ is as follows:
\begin{itemize}
\item $\mathcal{X}$, $\mathcal{Y}$, $\mathcal{Z}$, $Z_0$ and $f$ are as defined for $\mathcal{F}$;
\item $\tau$ is as defined above;
\item $\zeta: 2^{\mathcal{Z}} \times 2^{\mathcal{X}} \rightarrow 2^{\mathcal{Z}}$ is the \emph{transition function} such that $\zeta(Z,X) = \eta(X,Y,Z)$ and $Y = \tau(Z)$.
\end{itemize}
\end{definition}
Note that $\tau$ selects a single winning output for each winning state. Such a transducer is a solution to the DFA game, and therefore to the \LTLf synthesis problem. 

The following theorem states the correctness of the entire synthesis procedure:

\begin{theorem}
If an \ltlf formula $\phi$ is realizable with respect to $\langle\X, \Y\rangle$, the symbolic transducer $\H$ corresponds to a winning strategy for $\phi$.
\end{theorem}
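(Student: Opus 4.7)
The plan is to establish correctness in three layered steps: (i) the fixpoint computation terminates with formulas $w$ and $t$ that symbolically characterize the winning states and winning state/output pairs of the DFA game corresponding to $\phi$; (ii) boolean synthesis then selects, at every winning state, a genuine winning output; (iii) unrolling the symbolic transducer $\H$ against any environment produces an accepting run of the DFA, which by the $\ltlf$-to-DFA correspondence yields a finite trace satisfying $\phi$.

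First I would prove by induction on $i$ the invariants: $Z \models w_i$ iff the state encoded by $Z$ admits a winning strategy reaching $F$ in at most $i$ rounds, and $(Z,Y) \models t_i$ iff either $Z$ encodes an accepting state, or $Z$ encodes a winning state of level at most $i$ and $Y$ forces the successor into the level-$(i{-}1)$ winning set. The base case follows from $w_0 = t_0 = f$. In the inductive step, the update $t_{i+1} = t_i \lor (\neg w_i \land \forall X.\, w_i(\eta(X,Y,Z)))$ adds exactly those $(Z,Y)$ where $Y$ newly forces progress toward $F$, while $w_{i+1} = \exists Y.\, t_{i+1}$ projects out the newly winning states. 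Since $|S|$ is finite and the winning-level sets are monotone in $i$, a fixpoint $(w,t)$ is reached in at most $|S|$ iterations, and $w$ then captures exactly the winning states of the game.

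Next, realizability of $\phi$ together with the reduction to DFA games from \cite{DegVa15} ensures that $s_0$ is winning, hence $Z_0 \models w$. The specification of boolean synthesis \cite{DrLu.cav16} guarantees that $\tau(Z)$ satisfies $(Z,\tau(Z)) \models t$ whenever any witness exists, so $\tau(Z)$ is a genuine winning output at every winning $Z$. Unrolling $\H$ from $Z_0$ along an arbitrary environment sequence $X_0, X_1, \ldots$ produces states $Z_0, Z_1, \ldots$, and the invariant forces the winning-level to strictly decrease at each step, so some $Z_k$ satisfies $f$. The corresponding prefix $(X_0 \cup \tau(Z_0)), \ldots, (X_{k-1} \cup \tau(Z_{k-1}))$ is an accepting run of the DFA, which by the DFA-to-$\ltlf$ correspondence makes the induced finite trace a model of $\phi$. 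This is precisely the strategy required by the definition of $\ltlf$ realizability.

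The main obstacle will be the inductive verification of the winning-level invariant, and in particular justifying the $\neg w_i(Z)$ conjunct in the update rule: one must show that restricting new entries of $t_{i+1}$ to previously non-winning states (a) does not discard any genuinely new winning output, and (b) is essential for progress, since without it $\tau$ could select a self-loop at an already-winning state and thereby fail to reach $F$. A secondary but more routine step is the bookkeeping translation between the DFA-game formulation and the infinite-environment-sequence formulation used in the definition of $\ltlf$ realizability.
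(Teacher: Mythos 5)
Your proposal is correct in substance, but it takes a much more self-contained route than the paper does. The paper's own proof is a four-sentence appeal to the correctness of each pipeline component: the \ltlf-to-\FOL translation is correct by~\cite{DegVa13}, the symbolic DFA is correct by the correctness of MONA~\cite{KlaEtAl:Mona}, the winning-state fixpoint is correct because it implements the algorithm of~\cite{DegVa15}, and $\tau$ is correct by the definition of boolean synthesis. You instead prove the middle two components directly: a level-indexed invariant for $w_i$ and $t_i$, termination in at most $|S|$ iterations, and a ranking argument (strictly decreasing winning level along any play of $\H$) that turns $(Z,\tau(Z))\models t$ into an accepting run for every environment sequence, which is exactly the content the paper delegates to~\cite{DegVa15}. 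Your version buys a genuine explanation of why the $\neg w_i(Z)$ guard matters --- without it $\tau$ could pick an output that loops among already-winning states and never reach $f$ --- which the paper only discusses informally in the text preceding the theorem; the paper's version buys brevity by leaning on citations. One refinement you should make: your stated invariant for $t_i$ cannot be an ``iff.'' Because of the $\neg w_i$ guard, $t$ records only those outputs chosen at the level at which a state \emph{first} becomes winning, so a pair $(Z,Y)$ where $Z$ is already winning at a lower level but $Y$ also forces progress need not satisfy $t_i$. Only the forward direction (every non-accepting $(Z,Y)\models t$ has $Y$ forcing all successors into a strictly lower level) is true, and fortunately that is all your ranking argument uses; state the invariant that way and the proof goes through.
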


\begin{proof}
The translation from \ltlf to FOL is correct from~\cite{DegVa13}. Correctness of the construction of the symbolic DFA from the FOL formula follows from the correctness of MONA~\cite{KlaEtAl:Mona}. The winning state computation is correct due to being an implementation of the algorithm in~\cite{DegVa15}. The synthesis of $\tau$ is correct from the definition of boolean synthesis.
\end{proof}

\subsection{\LTLf Synthesis via \LTL Synthesis} \label{sec:reduction}
An alternative procedure to \LTLf synthesis can be obtained by a reduction to \LTL synthesis. This reduction allows tools for general \LTL synthesis to be used in solving the \LTLf synthesis problem.
The reduction works as follows. 1) Given an \LTLf formula $\phi$ over propositions $P$, there is an \LTL formula $t(\phi)$ over propositions $P\cup\{Tail\}$ where $Tail$ is a new variable, and $\phi$ is satisfiable iff $t(\phi)$ is satisfiable (see \cite{DegVa13,JZPVH14}). 2) If $P = \X\cup\Y$, and $\X$ and $\Y$ are the set of input and output variables of $\phi$, respectively, set $\X$ and $\Y \cup \{Tail\}$ as the input and output variables of $t(\phi)$, respectively.
Intuitively, the $Tail$ variable is an output variable that indicates when the finite trace should end. As such, when the \LTLf property is satisfied, the controller can set this variable to $true$. The result of solving the \LTL synthesis problem for $t(\phi)$ corresponds to solving the \LTLf synthesis problem for $\phi$. The correctness of the reduction is guaranteed by the following theorem, which follows from the construction in~\cite{DegVa13} together with $Tail$ being an output variable. 

\begin{theorem}
$\phi$ is realizable with respect to $\langle\X, \Y\rangle$ if and only if $t(\phi)$ is realizable with respect to $\langle\X, \Y\cup\{Tail\}\rangle$.
\end{theorem}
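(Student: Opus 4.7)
The plan is to prove both directions of the equivalence by explicitly converting a winning strategy for one problem into a winning strategy for the other, leaning on the satisfiability-preserving translation of~\cite{DegVa13} for the underlying trace correspondence. Recall that in that construction $t(\phi)$ forces $Tail$ to hold on an initial prefix and to be false forever afterwards, and a finite trace $\rho$ satisfies $\phi$ iff the infinite trace $\rho' = (\rho[0]\cup\{Tail\})\cdots(\rho[|\rho|-1]\cup\{Tail\})(\emptyset)^{\omega}$, or any variant that keeps $Tail$ false from step $|\rho|$ on, satisfies $t(\phi)$. The critical structural fact I will exploit is that $Tail$ has been placed among the \emph{output} variables of $t(\phi)$, so the controller has full authority over when the prefix ends.

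First I would prove the forward direction. Assume $\phi$ is realizable via a winning strategy $g : (2^{\X})^{*} \to 2^{\Y}$. By definition, for every infinite sequence $\lambda = X_0,X_1,\ldots \in (2^{\X})^{\omega}$ there is some minimal $k(\lambda) \geq 0$ such that the finite trace $\rho_{\lambda} = (X_0 \cup g(\epsilon)),\ldots,(X_{k}\cup g(X_0,\ldots,X_{k-1}))$ satisfies $\phi$. Define a new strategy $g' : (2^{\X})^{*} \to 2^{\Y \cup \{Tail\}}$ that simulates $g$ on the $\Y$ variables and manages $Tail$ as follows: on every history of length $\leq k(\lambda)$ it outputs $g(\cdot) \cup \{Tail\}$, and from length $k(\lambda)+1$ onward it outputs $g(\cdot)$ without $Tail$. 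Since $g'$ only needs to depend on the history of inputs seen so far, the index $k(\lambda)$ can be recognized online as the first position at which the $\phi$-acceptance condition is met along the current history, so $g'$ is a well-defined strategy. The resulting infinite trace is exactly the $\rho'$ described above, so by the translation theorem from~\cite{DegVa13} it satisfies $t(\phi)$, witnessing realizability of $t(\phi)$ over $\langle \X, \Y \cup \{Tail\}\rangle$.

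For the reverse direction, assume $g' : (2^{\X})^{*} \to 2^{\Y \cup \{Tail\}}$ is a winning strategy for $t(\phi)$. For every $\lambda \in (2^{\X})^{\omega}$ the resulting infinite play satisfies $t(\phi)$; by the shape of $t(\phi)$, the controller's output $Tail$ must be true on some initial segment of length $k_{\lambda} \geq 1$ and false thereafter, and the length-$k_{\lambda}$ prefix projected to $P$ is a finite trace satisfying $\phi$. Define $g : (2^{\X})^{*} \to 2^{\Y}$ by $g(h) = g'(h) \cap \Y$. Then for every $\lambda$, the length-$k_{\lambda}$ prefix built from $\lambda$ using $g$ is exactly the $P$-projection of the prefix produced by $g'$, hence satisfies $\phi$, so $g$ is a winning strategy for $\phi$.

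The main obstacle is the small but essential check that the ``witness position'' $k(\lambda)$ in the forward direction can be detected \emph{causally} by the strategy, so that $g'$ depends only on the prefix of inputs already seen rather than on the whole infinite $\lambda$; once one notes that membership of a finite trace in the language of $\phi$ is computable from that trace alone (it is a regular property), this reduces to picking the first accepting position along the current run, which is a legitimate strategy. The rest of the argument is essentially bookkeeping on top of the $\phi \leftrightarrow t(\phi)$ trace correspondence of~\cite{DegVa13}, exploiting exactly the fact that $Tail$ sits on the controller's side of the input/output partition.
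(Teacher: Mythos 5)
Your proof is correct and takes essentially the same route as the paper's argument (which is only sketched, appealing to the construction of~\cite{DegVa13} together with the fact that $Tail$ is an output variable): both directions convert winning strategies directly via the $\phi$/$t(\phi)$ trace correspondence, with the controller managing $Tail$ in the forward direction and $Tail$ simply projected away in the backward one, and your explicit check that the witness position can be detected causally from the input history is exactly the right detail to make the forward construction well-defined. The only difference is cosmetic: you assume the ``alive''-style translation where $Tail$ holds throughout the finite prefix and is false afterwards, whereas the paper's sketch marks only the last position of the trace with $Tail$; the strategy-conversion argument is the same under either convention.
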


In Section~\ref{sec:experiments}, we use this reduction to compare our \LTLf synthesis implementation with tools for standard \LTL synthesis over infinite traces.

\section{Implementation}
We implemented the symbolic synthesis procedure described in Section~\ref{sec:booleansyn} in a tool called \emph{Syft}, using Binary Decision Diagrams (BDDs) to represent the boolean formulas. For comparison, we also implemented the explicit version in a tool called \emph{E-Syft}. Both were implemented in C++11, and \emph{Syft} uses CUDD-3.0.0 as the BDD library.

Each tool consists of two parts: DFA construction and the synthesis procedure.
Based on the evaluations of the performance of DFA construction in Section \ref{sec:dfa}, we adopt MONA to construct the DFA to be given as input to the synthesis procedure. The DFA is given by MONA as a \emph{Shared Multi-terminal BDD} (ShMTBDD)~\cite{Bryant92,BieKlaRau96}, which represents the function $\delta : S \times 2^{\mathcal{X} \cup \mathcal{Y}} \rightarrow S$.  
A ShMTBDD is a binary decision diagram with $|S|$ roots and $m$ terminal nodes ($m \leq |S|$) representing states in the automaton. 
Formally speaking, $\delta (s, X\cup Y) = s'$ is a transition in the DFA if and only if starting from the root representing state $s$ and evaluating the interpretation $X \cup Y$ leads to the terminal representing state $s'$. To evaluate an interpretation $X \cup Y$ on a ShMTBDD, take the high branch in every node labeled by a variable $v \in X \cup Y$ and the low branch otherwise.
We next describe how we preprocess the DFA given by MONA.

\subsection {Preprocessing the DFA of MONA}

\textbf{From ShMTBDD to Explicit DFA}.
Each root in an ShMTBDD corresponds to an explicit state in the DFA.
Moreover, a root of the ShMTBDD includes the information about if the state is initial or accepting, thus enabling $s_0$ and $F$ to be easily extracted for the explicit DFA. To construct the transition function, 
we enumerate all paths in the ShMTBDD, each 
path from $s$ (root) to $t$ (terminal node) corresponding to one transition from state $s$ to $t$ in the DFA. Note that the size of the explicit DFA may be exponential on the size of the ShMTBDD.

\textbf{From ShMTBDD to BDD}. 
Following Section \ref{sec:booleansyn}, we can construct a symbolic automaton in which the transition function $\eta$  is in form of a (multi-rooted) BDD that describes a boolean function $\eta : 2^\mathcal{X} \times 2^\mathcal{Y} \times 2^\mathcal{Z} \rightarrow 2^\mathcal{Z}$. Thus we need to first generate the BDD for the given ShMTBDD.

The basic idea is as follows: (1) From the ShMTBDD of $\delta$, construct a Multi-Terminal BDD (MTBDD) for $\delta'  : 2^\mathcal{Z} \times 2^\mathcal{X} \times 2^\mathcal{Y} \rightarrow S$ with $\lceil \log_2|S|\rceil$ new boolean variables encoding the states, where every path through the state variables, representing an interpretation $Z_s$, leads to the node under the root $s$ in the ShMTBDD. Then, for each transition in $\delta$, there exists an equivalent transition in $\delta'$. (2) Decompose the MTBDD into a sequence of BDDs $\mathcal{B} = \langle B_0, B_1,...,B_{n-1} \rangle$, $n = \lceil \log_2|S|\rceil$, where each $B_i$, when evaluated on an interpretation $(X \cup Y \cup Z)$, computes the $i$-th bit in the binary encoding of state $\delta'(X,Y,Z)$.

The idea of splitting the ShMTBDD into BDDs is illustrated on Figure \ref{fig:shmtbdd2bdd}. As shown in this example, bits $b_0, b_1$ are used to denote the four states $s_0, s_1, s_2, s_3$. In step (1), root $s_0$ is substituted by $Z_{s_0}$ that corresponds to the formula ($\neg b_0 \wedge \neg b_1$). After replacing all roots with corresponding interpretations, the MTBDD is produced. In step (2), $s_0, s_1, s_2, s_3$ can be represented by $00, 01, 10, 11$ respectively, where $b_0$ denotes the leftmost bit. Bit $b_0$ for both $s_0$ and $s_1$ is $0$. So by forcing all paths that proceed to terminals $s_0$ and $s_1$ in the MTBDD to reach terminal node $0$, and all paths to terminals $s_2$ and $s_3$ to reach terminal node $1$, BDD $B_0$ is generated. BDD $B_1$ is constructed in an analogous way for bit $b_1$.
\begin{figure}[t]
  \centering
  \includegraphics[width=2.84in]{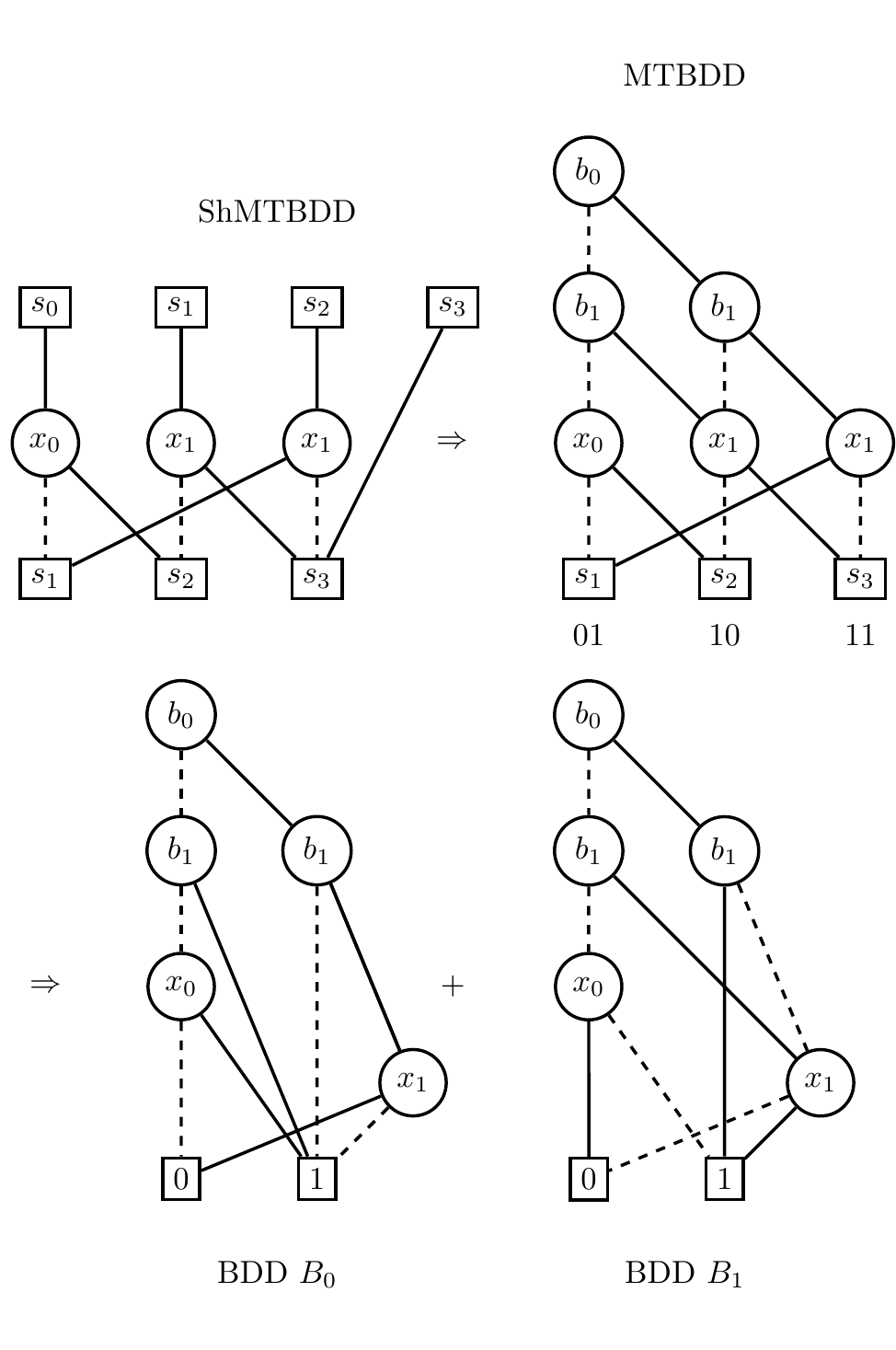}\\
  \caption{Transformation from ShMTBDD to BDD}\label{fig:shmtbdd2bdd}
\end{figure}

\subsection {Implementing the Synthesis Algorithms}

\noindent\textbf{Explicit Synthesis}.
Following \cite{DegVa15}, the main algorithm for explicit synthesis is as follows: starting from accepting states in $F$, iteratively expand the set of winning states. For every state $s$ that is not yet a winning state, if we find an assignment $Y$ such that for all assignments $X$ of input variables, $\delta(s,X \cup Y)$ is a winning state, then $Y$ is set to be the winning output of the new winning state $s$. All assignments of $Y$ have to be enumerated in the worst case. Fixpoint checking is accomplished by checking that no new winning state was added during each iteration. The transducer is generated according to Definition~\ref{extrans}.

\noindent\textbf{Symbolic Synthesis}.
The input here is a symbolic automaton $\mathcal{F}$, in which the transition relation $\eta$ is represented by a sequence $\mathcal{B} =\langle B_0, B_1,...,B_{n-1}\rangle$ of BDDs, where each $B_i$ corresponding to bit $b_i$, and the formula $f$ for the accepting states, is represented by BDD $B_f$. We separate the DFA game into two phases, \emph{realizability} and \emph{strategy construction}.

Following the theoretical framework in Section \ref{sec:booleansyn}, from the accepting states $B_f$, we construct two BDD sequences $\mathbf{T}$ and $\mathbf{W}$, such that $\mathbf{T} = \langle B_{t_0}, B_{t_1},...,B_{t_{i-1}}, B_{t_i}\rangle$ and $\mathbf{W} = \langle B_{w_0}, B_{w_1},...,B_{w_{i-1}}, B_{w_i}\rangle$, where $B_{t_i}$ and $B_{w_i}$ are the BDDs of the boolean formulas $t_i$ and $w_i$ respectively. The fixpoint computation terminates as soon as $B_{w_{i+1}} \equiv B_{w_i}$. 

Our implementation uses the CUDD BDD library~\cite{cudd}, where  fundamental BDD operations are provided. In realizability checking, $B_{t_{i+1}}$ is constructed from $B_{w_i}$ by first substituting each bit $b_i$ of the binary encoding of the state with the corresponding BDD $B_i$. The BDD operation \textit{Compose} is used for such substitution. CUDD also provides the operations \textit{UnivAbstract} and \textit{ExistAbstract} for universal and existential quantifier elimination respectively.
For fixpoint checking, we use canonicity of BDDs, which reduces equivalence checking to constant-time BDD equality comparison. To check realizability, the fixpoint BDD $B_{w_i}$ is evaluated on interpretation $Z_0$ of state variables, returning $1$ if realizable. 

Since veision 3.0.0, CUDD includes a built-in boolean synthesis method \emph{SolveEqn}, which we can use to generate the function $\tau$. From $\tau$, the symbolic transducer can be constructed according to Definition~\ref{symtrans}.

\section{Experiments} \label{sec:experiments}
The experiment was divided into two parts and resulted in two major findings. First, we compared the symbolic approach against the explicit method and showed that the symbolic approach is clearly  superior to the explicit one. Second, we evaluated our \LTLf synthesis implementation against the reduction to \LTL synthesis presented in Section~\ref{sec:reduction}, and again show our approach to be more efficient.

We carried out experiments on the same platform as Section \ref{sec:dfa}.
For the synthesis experiments, besides the original 20 basic cases we also collected 80 instances from the \LTL synthesis tool $Acacia+$~\cite{BohyBFJR12}, making 100 cases in total. In this section,
$L$ denotes the length of the formulas and $N$ denotes the approximate number of variables.
Due to the construction rules of the formula, as described in Section \ref{sec:dfa}, each variable is chosen randomly and may be chosen multiple times. Thus, the exact number of variables in a formula for a given $N$ varies in the range $(N-5, N+5]$.
\subsection{Symbolic vs Explicit}
We aim to compare the results on two aspects: 1) the scalability of each approach; 2) the performance on each procedure (automata construction/safety game) of \LTLf synthesis.

\textbf{\emph{Scalability on the length of formulas.}}
We evaluated the scalability of \emph{Syft} and \emph{E-Syft} on 2000 benchmarks where the formula length $L$ ranges from 1 to 10 and $N=20$, as MONA is more likely to succeed in constructing the DFA for this value of $N$. Each $L$ accounts for 200 of the 2000 benchmarks. Figure~\ref{fig:synlength} shows the number of cases the tools were able to solve for each $L$. As can be seen, the symbolic method can handle a larger number of cases than the explicit one, which demonstrates that the symbolic method outperforms the explicit approach in scalability on the length of the formula.
\begin{figure}
  \centering
  \includegraphics[width=3in]{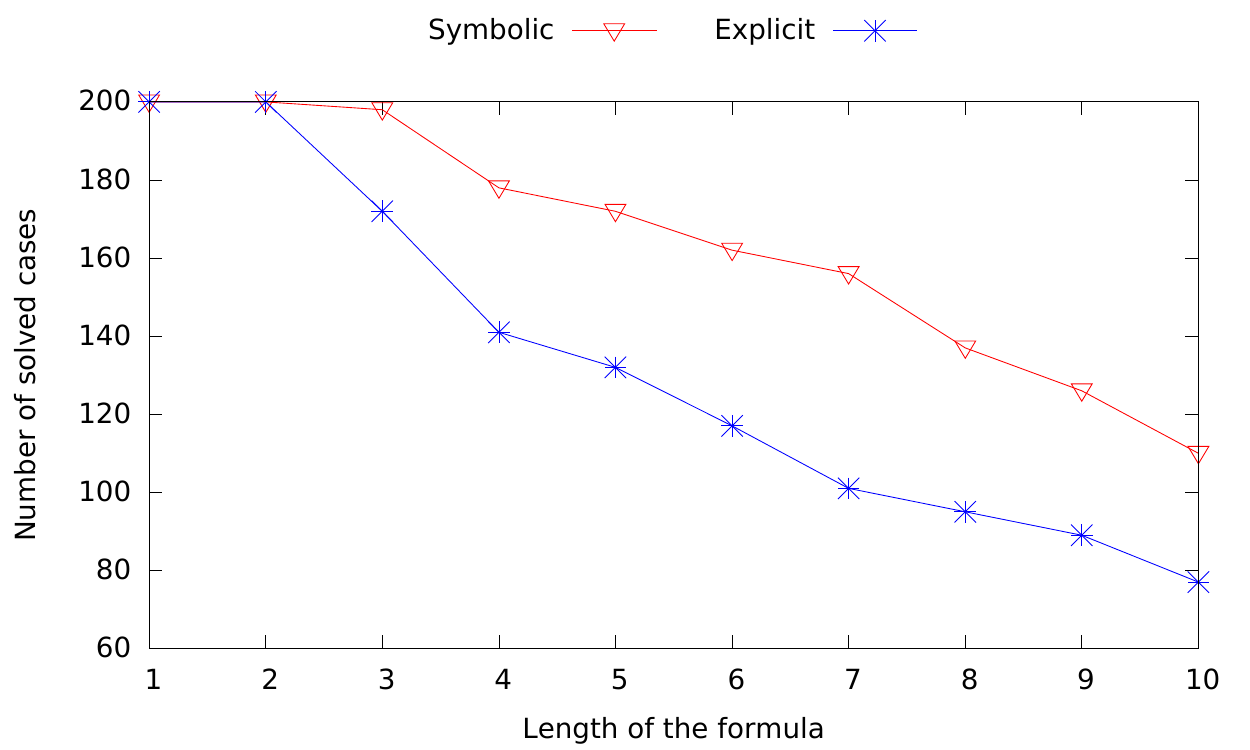}
  \caption{Comparison of scalability of Symbolic against Explicit on the length of formula}\label{fig:synlength}
\end{figure}

\textbf{\emph{Scalability on the number of variables.}}
In this experiment, we compared the scalability over the number of variables $N$, which varies from 10 to 60, where $L$ is fixed as 5. As can be seen in Figure \ref{fig:synvars}, for smaller cases the two approaches behave similarly, since they succeed in almost all cases. The same happens for larger cases, because MONA cannot construct the DFAs. For intermediate values the difference is more noticeable, showing better performance by the symbolic tool. The number of cases that the explicit method can solve sharply declines when $N=30$. However, the symbolic tool can handle more than 40 variables. Both methods tend to fail for $N>50$ at the DFA conversion stage.
\begin{figure}[t]
  \centering
  \includegraphics[width=3in]{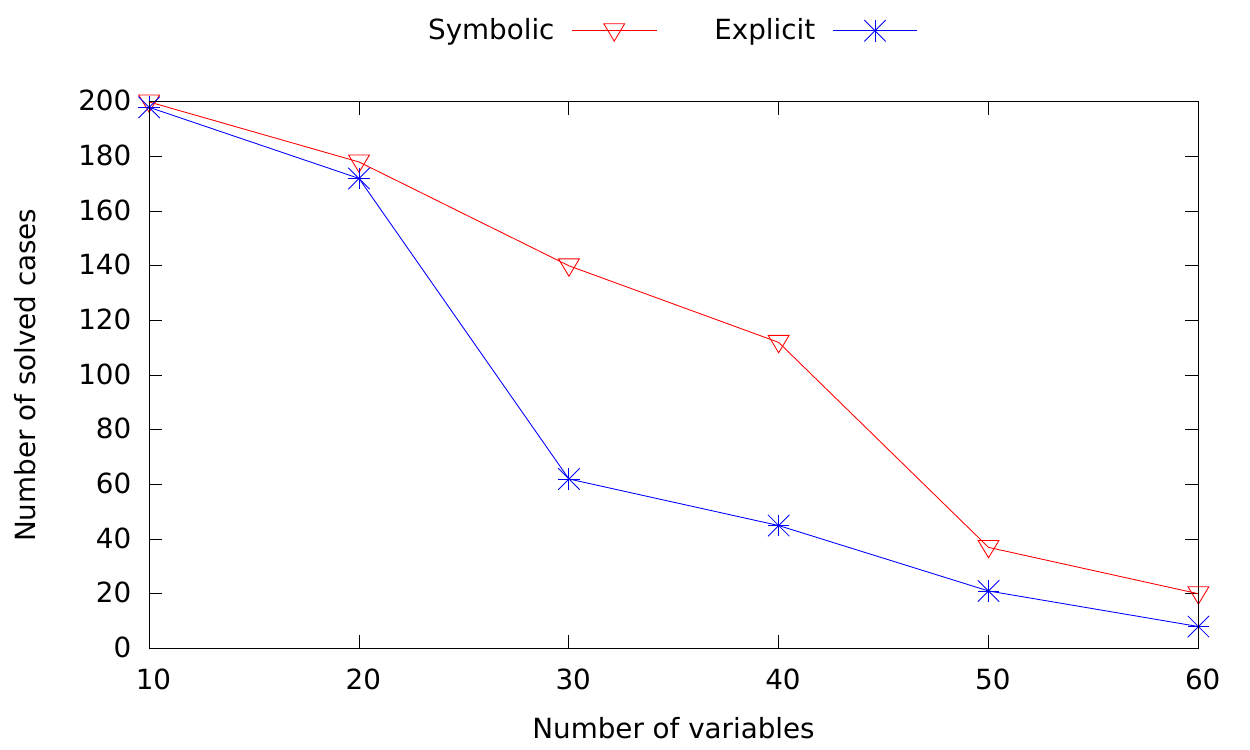}
  \caption{Comparison of scalability of Symbolic against Explicit on the number of variables}\label{fig:synvars}
\end{figure}

\textbf{\emph{Synthesis~vs~DFA Construction.}} 
In this experiment we studied the effect of $N$ on the time consumed by DFA construction and synthesis.
The percentage of time consumed by each is shown in Figure \ref{fig:dfasyn}. We observe that for large cases, DFA construction dominates the running time. As the size of the DFA increases, DFA construction takes significantly longer, while synthesis time increases more slowly, widening the gap between the two. This result allows us to conclude that the critical performance limitation of synthesis is the DFA construction process rather than synthesis itself. 
\begin{figure}
  \centering
  \includegraphics[width=3in]{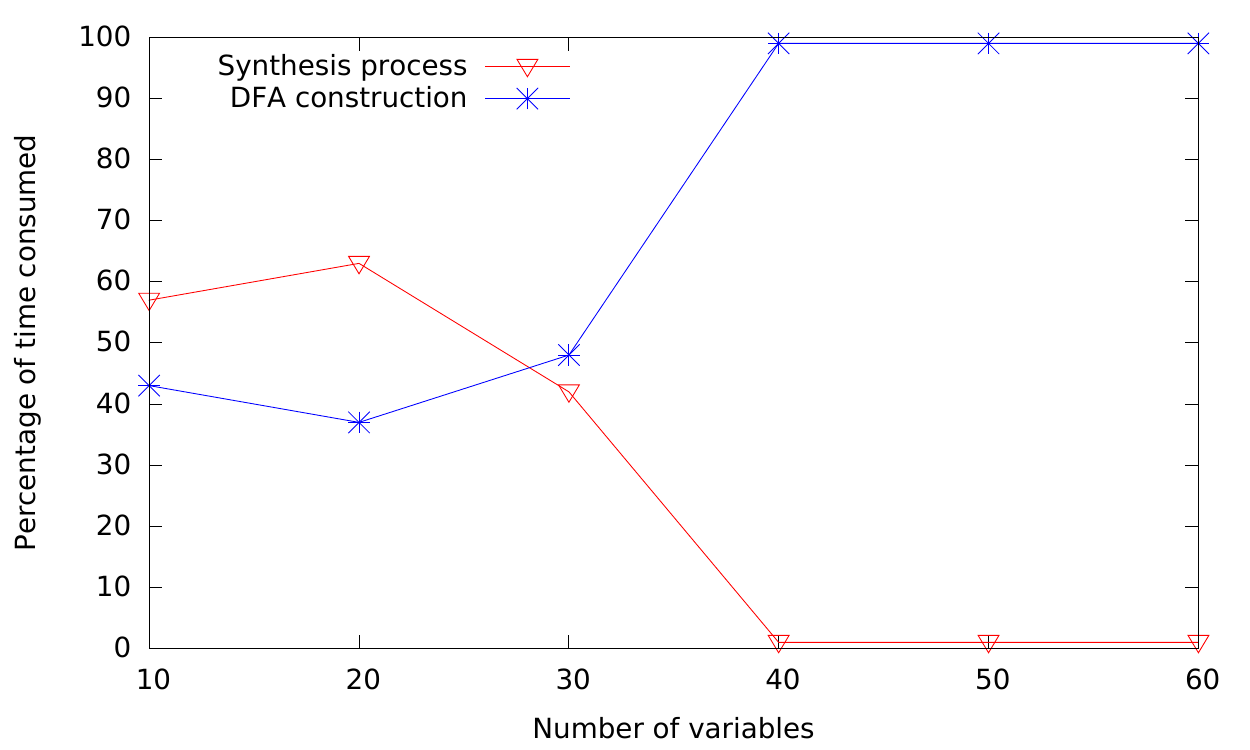}
  \caption{Comparison of scalability of Symbolic against Explicit on the number of variables}\label{fig:dfasyn}
\end{figure}

\subsection{Comparison with Standard \LTL Synthesis}
Here we adopted Acacia+~\cite{BohyBFJR12}, in which LTL2BA~\cite{ltl2ba} is the automaton constructor, as the \LTL synthesis tool and SPOT~\cite{spot} as the translator to convert from \ltlf formulas to \LTL formulas, as presented in  Section~\ref{sec:reduction}. 
We evaluated the effectiveness of each approach in terms of the number of solved cases. Figure~\ref{fig:acacia} shows the number of solved cases as the length of the formula grows. As shown in the figure, Acacia+ is only able to solve a small fraction of the instances that \emph{Syft} can solve.
The evidence here confirms that restricting the problem to finite traces allows more efficient techniques to be used for synthesis.

\begin{figure}[t]
  \centering
  \includegraphics[width=3in]{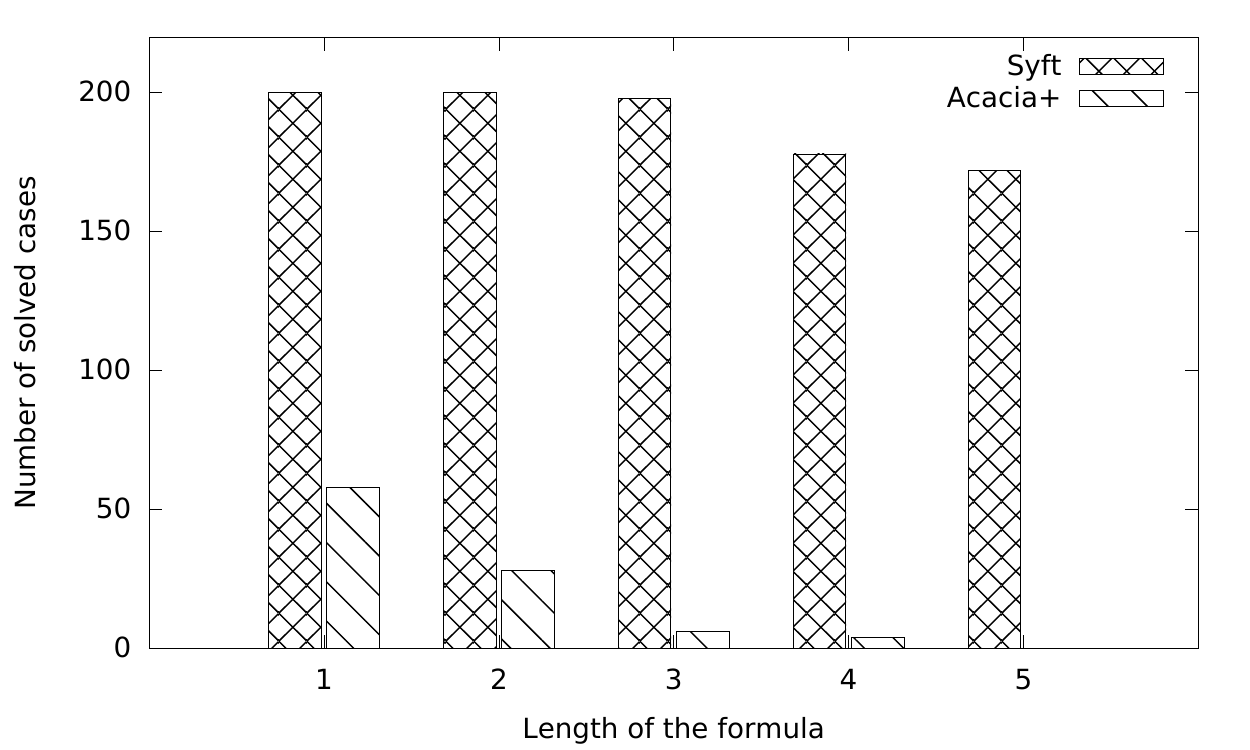}
  \caption{Comparison of scalability of Symbolic approach against Acacia+ on the length of the formula}\label{fig:acacia}
\end{figure}

\subsection{Discussion}
The symbolic synthesis method scales better than the explicit approach both on the length of \ltlf formulas and the number of variables. The performance of the symbolic method, however, relies critically on the DFA-construction process, making this the bottleneck of \ltlf synthesis. Comparing with the reduction from \ltlf synthesis to \LTL synthesis, the advantage of our approach is that the DFA construction can leverage techniques developed for finite-state automata, which cannot be applied to the construction of automata over infinite words, a key step in \LTL synthesis.

\section{Conclusion}

We presented here the first realization of a symbolic approach for \LTLf synthesis, based on the theoretical framework of DFA games. Our experimental evaluation shows that these techniques are far more efficient than a reduction to standard \LTL synthesis. Furthermore, our experiments on DFA construction and synthesis have shown that a symbolic approach to this problem has advantages over an explicit one. In both cases, however, the limiting factor for scalability was DFA construction. When the DFA could be constructed, the symbolic procedure was able to synthesize almost all cases, but for larger numbers of variables DFA construction is not able to scale.

These observations suggest the need for more scalable methods of symbolic DFA construction. A promising direction would also be to develop techniques for performing synthesis ``on the fly'' during the construction of the DFA, rather than waiting for the entire automaton to be constructed before initiating the synthesis procedure.

In~\cite{DegVa13} an alternative formalism for finite-horizon temporal specifications is presented in the form of Linear Dynamic Logic over finite traces, or \LDLf. \LDLf is strictly more expressive than \LTLf, but is also expressible as a DFA. Therefore, given a procedure to perform the conversion from \LDLf to DFA, our approach can be used in the same way. This would allow synthesis to be performed over a larger class of specifications.

\section{Acknowledgment}
Work supported in part by NSFC Projects No.~61572197 and No.~61632005, MOST NKTSP Project~2015BAG19B02, STCSM Project~No.16DZ1100600, project Shanghai Collaborative Innovation Center of Trustworthy Software for Internet of Things~(ZF1213), NSF grants~CCF-1319459 and~IIS-1527668, NSF Expeditions in Computing project~``ExCAPE: Expeditions in Computer Augmented Program Engineering'' and by the Brazilian agency CNPq through the Ci\^{e}ncia Sem Fronteiras program.

\newpage
\bibliographystyle{named}
\bibliography{ijcai17}

\begin{thebibliography}{}

\bibitem[\protect\citeauthoryear{Bacchus and Kabanza}{2000}]{BacchusK00}
Fahiem Bacchus and Froduald Kabanza.
\newblock {Using Temporal Logics to Express Search Control Knowledge for
  Planning}.
\newblock {\em Artif. Intell.}, 116(1-2):123--191, 2000.

\bibitem[\protect\citeauthoryear{Bohy \bgroup \em et al.\egroup
  }{2012}]{BohyBFJR12}
Aaron Bohy, V{\'{e}}ronique Bruy{\`{e}}re, Emmanuel Filiot, Naiyong Jin, and
  Jean{-}Fran{\c{c}}ois Raskin.
\newblock {Acacia+, a Tool for {LTL} Synthesis}.
\newblock In {\em {CAV}}, 2012.

\bibitem[\protect\citeauthoryear{Bryant}{1992}]{Bryant92}
Randal~E. Bryant.
\newblock {Symbolic Boolean Manipulation with Ordered Binary-Decision
  Diagrams}.
\newblock {\em {ACM} Comput. Surv.}, 24(3):293--318, 1992.

\bibitem[\protect\citeauthoryear{Daniele \bgroup \em et al.\egroup
  }{1999}]{DanieleGV99}
Marco Daniele, Fausto Giunchiglia, and Moshe~Y. Vardi.
\newblock {Improved Automata Generation for Linear Temporal Logic}.
\newblock In {\em {CAV}}, 1999.

\bibitem[\protect\citeauthoryear{De~Giacomo and Vardi}{2015}]{DegVa15}
Giuseppe De~Giacomo and Moshe~Y. Vardi.
\newblock {Synthesis for {LTL} and {LDL} on Finite Traces}.
\newblock In {\em {IJCAI}}, 2015.

\bibitem[\protect\citeauthoryear{Duret-Lutz \bgroup \em et al.\egroup
  }{2016}]{spot}
Alexandre Duret-Lutz, Alexandre Lewkowicz, Amaury Fauchille, Thibaud Michaud,
  Etienne Renault, and Laurent Xu.
\newblock {Spot 2.0 --- A Framework for {LTL} and $\omega$-automata
  Manipulation}.
\newblock In {\em ATVA}, 2016.

\bibitem[\protect\citeauthoryear{Dutta and Vardi}{2014}]{DuttaV14}
Sonali Dutta and Moshe~Y. Vardi.
\newblock {Assertion-based flow monitoring of SystemC models}.
\newblock In {\em {MEMOCODE}}, pages 145--154, 2014.

\bibitem[\protect\citeauthoryear{Dutta \bgroup \em et al.\egroup
  }{2013}]{DuttaVT13}
Sonali Dutta, Moshe~Y. Vardi, and Deian Tabakov.
\newblock {{CHIMP:} {A} Tool for Assertion-Based Dynamic Verification of
  SystemC Models}.
\newblock In {\em the Second International Workshop on Design and
  Implementation of Formal Tools and Systems}, 2013.

\bibitem[\protect\citeauthoryear{Fogarty \bgroup \em et al.\egroup
  }{2013}]{DFogartyKVW13}
Seth Fogarty, Orna Kupferman, Moshe~Y. Vardi, and Thomas Wilke.
\newblock {Profile Trees for B{\"{u}}chi Word Automata, with Application to
  Determinization}.
\newblock In {\em GandALF}, 2013.

\bibitem[\protect\citeauthoryear{Fried \bgroup \em et al.\egroup
  }{2016}]{DrLu.cav16}
Dror Fried, Lucas~M. Tabajara, and Moshe~Y. Vardi.
\newblock {BDD-Based Boolean Functional Synthesis}.
\newblock In {\em {CAV}}, 2016.

\bibitem[\protect\citeauthoryear{Gastin and Oddoux}{2001}]{ltl2ba}
Paul Gastin and Denis Oddoux.
\newblock {Fast {LTL} to {B}{\"u}chi Automata Translation}.
\newblock In {\em {CAV}}, 2001.

\bibitem[\protect\citeauthoryear{Gerevini \bgroup \em et al.\egroup
  }{2009}]{GHL09}
Alfonso Gerevini, Patrik Haslum, Derek Long, Alessandro Saetti, and Yannis
  Dimopoulos.
\newblock {Deterministic planning in the fifth international planning
  competition: {PDDL3} and experimental evaluation of the planners}.
\newblock {\em Artif. Intell.}, 173(5-6):619--668, 2009.

\bibitem[\protect\citeauthoryear{Giacomo and Vardi}{2013}]{DegVa13}
Giuseppe~De Giacomo and Moshe~Y. Vardi.
\newblock {Linear Temporal Logic and Linear Dynamic Logic on Finite Traces}.
\newblock In {\em {IJCAI}}, 2013.

\bibitem[\protect\citeauthoryear{Henriksen \bgroup \em et al.\egroup
  }{1995}]{KlaEtAl:Mona}
Jesper~G. Henriksen, Jakob~L. Jensen, Michael~E. J{\o}rgensen, Nils Klarlund,
  Robert Paige, Theis Rauhe, and Anders Sandholm.
\newblock {Mona: Monadic Second-order Logic in Practice}.
\newblock In {\em TACAS}, 1995.

\bibitem[\protect\citeauthoryear{Jobstmann and Bloem}{2006}]{Lily}
Barbara Jobstmann and Roderick Bloem.
\newblock Optimizations for {LTL} synthesis.
\newblock In {\em {FMCAD}}, 2006.

\bibitem[\protect\citeauthoryear{Li \bgroup \em et al.\egroup }{2014}]{JZPVH14}
Jianwen Li, Lijun Zhang, Geguang Pu, Moshe~Y. Vardi, and Jifeng He.
\newblock {LTL$_f$ Satisfiability Checking}.
\newblock In {\em ECAI}, 2014.

\bibitem[\protect\citeauthoryear{Morten \bgroup \em et al.\egroup
  }{1996}]{BieKlaRau96}
Biehl Morten, Klarlund Nils, and Rauhe Theis.
\newblock {Mona: decidable arithmetic in practice (demo)}.
\newblock In {\em FTRTFT}, 1996.

\bibitem[\protect\citeauthoryear{Pnueli and Rosner}{1989}]{PnueliR89}
Amir Pnueli and Roni Rosner.
\newblock {On the Synthesis of a Reactive Module}.
\newblock In {\em {POPL}}, 1989.

\bibitem[\protect\citeauthoryear{Pnueli}{1977}]{Pnu77}
A.~Pnueli.
\newblock {The temporal logic of programs}.
\newblock pages 46--57, 1977.

\bibitem[\protect\citeauthoryear{Somenzi}{2016}]{cudd}
Fabio Somenzi.
\newblock {CUDD: CU Decision Diagram Package 3.0.0. Universiy of Colorado at
  Boulder}.
\newblock 2016.

\bibitem[\protect\citeauthoryear{Tabakov \bgroup \em et al.\egroup
  }{2012}]{TRV12}
D.~Tabakov, K.Y. Rozier, and M.~Y. Vardi.
\newblock {Optimized temporal monitors for {SystemC}}.
\newblock {\em Formal Methods in System Design}, 41(3):236--268, 2012.

\bibitem[\protect\citeauthoryear{Thomas}{1995}]{Thomas95}
Wolfgang Thomas.
\newblock {On the Synthesis of Strategies in Infinite Games}.
\newblock In {\em {STACS}}, 1995.

\end{thebibliography}

\end{document}